\newcommand{\sgn}{\operatorname{sgn}}
\theoremstyle{definition}
\newtheorem{ass}{Assumption}[section]
\newtheorem{theorem}[ass]{Theorem}
\newtheorem{lemma}[ass]{Lemma}
\newtheorem{corollary}[ass]{Corollary}
\def\indexname{Index of terminology}
\newcommand{\captionfonts}{\footnotesize}
\long\def\@makecaption#1#2{%
  \vskip\abovecaptionskip
  \sbox\@tempboxa{{\captionfonts #1: #2}}%
  \ifdim \wd\@tempboxa >\hsize
    {\captionfonts #1: #2\par}
  \else
    \hbox to\hsize{\hfil\box\@tempboxa\hfil}%
  \fi
  \vskip\belowcaptionskip}
\definecolor{darkolivegreen}{rgb}{0.333333, 0.419608, 0.1843140}
\def\printnotation{{%
\def\indexname{Index of notation}
\begin{theindex}
\@input{\jobname.ntn}
\end{theindex}
}}
\begin{document}


\title{Nonuniqueness of Representations of 
Wave Equations in Lorentzian Space-Times}

\author[1,2]{Horst Reinhard Beyer}

\affil[1]{Instituto Tecnol\'ogico Superior de Uruapan, 
Carr. Uruapan-Carapan No. 5555, Col. La Basilia, Uruapan, 
Michoac\'an, M\'exico}

\affil[2]{Theoretical Astrophysics, IAAT, Eberhard Karls 
University of T\"ubingen, T\"ubingen 72076, Germany}

\date{\today}                                     

\maketitle

\begin{abstract}
This brief note wants to bring to attention that the
formulation of physically reasonable initial-boundary
value problems for wave equations in Lorentzian space-times
is not unique, i.e., that there are inequivalent such 
formulations that lead to a different outcome 
of the stability discussion of the solutions. 
For demonstration, the paper uses the case of 
the wave equation on the right 
Rindler wedge in $2$-dimensional 
Minkowski space. The used methods can be generalized to
wave equations on stationary globally hyperbolic space-times 
with horizons in higher dimensions, such as
Schwarzschild and Kerr space-times. 
\end{abstract}

\section{Introduction}
\label{introduction}

The stability discussion of solutions of Einstein's field 
equations usually lead on wave equations in Lorentzian 
space-times, describing perturbations of the metric, 
together with physically boundary conditions.
In a second step, the definition of the precise class 
of the considered solutions is specified, i.e., a data 
space is chosen for the solutions of the equations that 
leads on a well-posed initial-boundary value problem. 
The latter provides the basis for a meaningful discussion of 
the stability of the solutions, i.e., the existence or 
non-existence of exponentially growing solutions. \\

This brief note wants to bring to attention that this
process is not unique, i.e., that there are inequivalent
ways of formulating initial-boundary value problems
for such wave equations and at the same time that the 
choice of the formulation can affect the outcome of 
the stability discussion. One such example is given  
by the treated relatively simple case of the 
wave equation on the right 
Rindler wedge in $2$-dimensional 
Minkowski space.   
  
\section{Restriction of the Wave Equation to the Right Wedge
of $2$-dimensional Minkowski Space-Time}

In the following, we consider the solutions of the wave 
equation on the right wedge, 
\begin{equation*}
R := \{(x^0,x^1) \in {\mathbb{R}}^2 : x^1 > |x^0|\} \, \, ,
\end{equation*}
of $2$-dimensional Minkowski 
space-time $({\mathbb{R}}^2,g)$, where
\begin{equation*}
g = ({\mathbb{R}}^2,dx^0 \otimes dx^0  - 
dx^1 \otimes dx^1) \, \, , 
\end{equation*} 
and $(x^0,x^1) : {\mathbb{R}}^2 \rightarrow {\mathbb{R}}^2$
denotes an inertial coordinate system. \footnote{If not 
otherwise indicated, the symbols 
$x^0,x^1,\tau,\xi$ denote coordinate projections whose 
domains will be obvious from the context. In addition, 
we assume 
the composition 
of maps, which includes addition, multiplication and so forth, 
always to be maximally defined. For instance, the sum of two 
complex-valued maps is defined on the intersection of their domains.}  
For the coordinatization of $R$, we use well-known 
``Rindler coordinates'' $(\tau,\xi) : R \rightarrow 
{\mathbb{R}}^2$ given by 
\begin{equation} \label{rindler}
\tau(x^0,x^1) := 
\frac{1}{2} \, \ln\left(\frac{x^1+x^0}{x^1-x^0}\right)
\, \, , \, \,  
\xi(x^0,x^1) := \frac{1}{2K} \, 
\ln\left(K^2[(x^1)^2 - (x^0)^2]\right)  
\end{equation} 
for all $(x^0,x^1) \in R$. Here $K > 0$ is 
a constant having the dimension $1/\textrm{length}$.
The inverse transformation to (\ref{rindler}) is given by   
\begin{equation*}
(\tau,\xi)^{-1}(\tau,\xi) =  
\frac{1}{K} \, \left(
e^{K\xi} \sinh(\tau), e^{K\xi} \cosh(\tau)
\right)
\end{equation*}
for every $(\tau,\xi) \in {\mathbb{R}}^2$, and 
the restriction $g|_{R}$ of $g$ to $R$ is given by
 
\begin{equation*}
g|_{R} =  e^{2 K\xi} \left(\frac{1}{K^2} \, 
d\tau \otimes d\tau - d\xi \otimes d\xi\right) \, \, .
\end{equation*}

Solutions $u \in C^{2}(R,{\mathbb{C}})$ of the wave equation on 
$R$ satisfy
\begin{equation} \label{waveequation}
\Box u = 
K^2 e^{-2K\xi} \partial^2_{\tau} u - e^{-2K\xi} 
\partial^2_{\xi} u = 0 \, \, ,
\end{equation}
or equivalently 
\begin{align}
\partial^2_{\tau} u & = - \left(
- \frac{1}{K^2} \,  \partial^2_{\xi}
\right) u \, \, , \label{eq:1} \\
e^{-K\xi} \partial^2_{\tau} \left(e^{-K\xi} u\right) & 
= - e^{-K\xi} \left[ e^{-K\xi}  
\left(
- \frac{1}{K^2} \,  \partial^2_{\xi} 
\right) e^{K\xi}\right]  \left(e^{-K\xi} u\right) \, \, .
\label{eq:2} 
\end{align}
Of course, many other forms of (\ref{waveequation})
are possible. In the following, we will consider only
(\ref{eq:1}) and (\ref{eq:2}).\\

As said, both equations 
are equivalent on the level of $C^2$-solutions.
For arbitrarily given data 
$g_1 \in C^2({\mathbb{R}},{\mathbb{R}})$ and $g_2 \in 
C^1({\mathbb{R}},{\mathbb{R}})$, there is a unique solution 
$u \in C^2({\mathbb{R}}^2,{\mathbb{R}})$ to these 
equations such that 
\begin{equation*} 
u(0,\xi) = g_1(\xi) \, \, , \, \, 
\frac{\partial u}{\partial \tau}(0,\xi) = g_2(\xi) 
\end{equation*}
for every $\xi \in {\mathbb{R}}$. Moreover, this solution 
is given by 
\begin{align} \label{sol11wave}
& u(\tau,\xi) =  \frac{1}{2} \, \bigg[ \, 
g_1\left(\xi + \frac{\tau}{K}\right) + g_1\left(\xi -
\frac{\tau}{K}\right) 
+ K \int_{\xi-\frac{\tau}{K}}^{\xi+\frac{\tau}{K}} 
g_2(s) \, ds \bigg]
\end{align}
for all $(\tau,\xi) \in {\mathbb{R}}^2$, and
\begin{align} \label{sol11waveinertial}
u(x^0,x^1) = \, \, & \frac{1}{2} \, \bigg\{ \, 
g_1\left(\frac{1}{K} \ln[K(x^1+x^0)]\right) + g_1
\left(\frac{1}{K} \ln[K(x^1-x^0)]\right) \nonumber 
\\ 
& \quad \, \, \, \, + K \int_{\frac{1}{K} \ln[K(x^1-x^0)]}^{\frac{1}{K} \ln[K(x^1+x^0)]} 
g_2(s) \, ds \bigg\}
\end{align}
for all $(x^0,x^1) \in R$.
\\

On the other hand, (\ref{eq:1}) results 
from (\ref{waveequation}) by solution for the highest 
time derivative. In this, (\ref{waveequation}) 
is divided by the unbounded function $K^2 e^{-2K\xi}$,
and later (\ref{eq:1}) will be treated analogous to the 
wave equation on $2$-dimensional Minkowski space. 
In this step geometrical information is lost. $(R,g|_{R})$
and Minkowski space are both globally hyperbolic, but 
$(R,g|_{R})$ is geodesically incomplete, i.e., there 
are maximal geodesics whose domains are proper subsets 
of ${\mathbb{R}}$, whereas Minkowski space is geodesically 
complete. It needs to be stressed that such loss of geometrical 
information is not particular to wave equations on 
$2$-dimensional Lorentzian 
space-times, but also happens in higher dimensions.
In addition, later (\ref{eq:1}) will be treated using methods
from operator theory, where an unbounded 
function corresponds to an 
unbounded (or ``discontinuous'') 
operator. Therefore, also from an operator theory perspective, 
the ``division'' of (\ref{waveequation}) by $K^2 e^{-2K\xi}$
needs consideration. \\
 
We are going to see that natural functional analytic 
treatments of (\ref{eq:1}), (\ref{eq:2}) lead to inequivalent 
well-posed initial value formulations for (\ref{waveequation}).
In this connection, it needs to be taken into account that 
(\ref{waveequation}) admits $C^2$-solutions of 
stronger than exponential growth in space and time, 
for instance,   
\begin{equation*}
u(\tau,\xi) = \exp\left\{\frac{\alpha}{K}\exp\left[K \left(\xi \pm 
\frac{\tau}{K}\right)\right]\right\}
\, \, 
\end{equation*}
where $(\tau,\xi) \in {\mathbb{R}}^2$ and 
$\alpha \in {\mathbb{C}}$. 
Therefore, the class of $C^2$-solutions does not provide 
a meaningful framework for discussions of stability, and hence 
some form of functional analytic treatment is necessary, at least
a restriction of the space of admissible data.

\section{A Common Functional Analytic 
Representation of (\ref{eq:1})}

(\ref{eq:1}) results from (\ref{waveequation}) by solution 
for the highest time derivative. In this, the whole equation
is divided by the unbounded function $K^2 e^{-2K\xi}$.
In the next step, (\ref{eq:1}) is represented as a member of 
the class of abstract evolution equations, see 
e.g., \cite{beyer7},
\begin{equation} \label{eq:1new}
u^{\, \prime \prime}(t) = - A \, u(t) \, \, ,
\end{equation}
$t \in {\mathbb{R}}$, where $A : D(A) \rightarrow X$ is 
some densely-defined, linear, positive self-adjoint 
operator in some non-trivial complex Hilbert space 
$\left(X,\braket{|}\right)$. \\

For every equation from this class and for any 
$g_1,g_2$ from the domain $D(A)$ of the corresponding operator 
$A$,
there is a uniquely determined twice continuously 
differentiable map 
$u : {\mathbb{R}} \rightarrow X$ assuming values in $D(A)$ and 
satisfying (\ref{eq:1new})
for all $t \in {\mathbb{R}}$ 
as well as
\begin{equation*}
u(0) = g_1 \, \, , \, \, u^{\, \prime}(0) = g_2 \, \, .
\end{equation*} 
It is important to note that, mainly as a consequence of the 
self-adjointness of $A$, this approach leads automatically 
to a conserved energy.\footnote{This energy corresponds to 
the canonical energy of the classical field $u$, 
described by (\ref{eq:1}), see e.g., \cite{soper}.}  
For this $u$, 
the corresponding canonical energy function 
$E_u : \mathbb{R} \rightarrow {\mathbb{R}}$,
defined by 
\begin{equation*}
E_u(t) := \frac{1}{2} \, \big( \, \braket{u^{\, \prime}(t)| u^{\, \prime}(t)}
+ \braket{u(t)| A u(t)}  \, \big)  
\end{equation*}
for all $t \in {\mathbb{R}}$,
is constant. \\

Finally, if $B:D(B) \rightarrow X$ is some square root of 
$A$, i.e., some densely-defined, linear, self-adjoint 
operator commuting
with $A$ which satisfies 
\begin{equation*}
B^2 = A \, \, ,
\end{equation*}
for example, $B = A^{1/2}$, then this $u$ is given by 
\begin{equation} \label{representation}
u(t) = \cos(tB)g_1 + \frac{\sin(tB)}{B}g_2
\end{equation} 
for all $t \in {\mathbb{R}}$ where $\cos(tB),\sin(tB)/B$ 
denote the bounded
linear operators that are associated by the functional calculus 
for $B$ to the restrictions of 
$\cos$, $\sin/\textrm{id}_{\mathbb{R}}$ to the spectrum of $B$. \\

We note that  
\begin{equation*}
\cos(tB) \, \, , \, \, \frac{\sin(tB)}{B} \, \, ,
\end{equation*}
for every $t \in {\mathbb{R}}$, are bounded linear operators. 
This leads to ``generalized'' solutions of (\ref{eq:1new}) 
for arbitrary data from $X$. For such generalized solutions, 
the corresponding ``energy'' is ill-defined. On the other hand, 
analogously to the Schroedinger equation of quantum theory, where
such generalized solutions are of course physical, 
not only elements from the domain of the Hamilton operator 
are admissible quantum states and are subject to time evolution
\footnote{See e.g., Section~2.1 in \cite{beyer7}.}, 
it does not appear reasonable to discard such 
generalized solutions from consideration. Analogous to 
Schroedinger theory, where the Schroedinger equation is 
merely a ``label'' for the generalized solutions given by 
the corresponding unitary 
one-parameter group, (\ref{eq:1new}) might be considered 
as a ``label'' for (\ref{representation}) and the latter being 
the truly relevant object for applications.  
\\  

Also, we note that (\ref{representation}) implies that  
\begin{equation*}
\|u(t)\| \leq \|g_1\| + |t| \|g_2\|
\end{equation*}
for $t \in {\mathbb{R}}$ and hence {\it that the solutions 
of (\ref{eq:1new}) are stable in the sense that there 
are no exponentially 
growing solutions.} \footnote{Note that the results of this 
section have generalizations to semibounded $A$, 
e.g., see Corollary~2.2.2 in \cite{beyer7}. In particular,
differently to positive $A$, for non-positive $A$, 
there are exponentially growing solutions to (\ref{eq:1new}).} \\

In our special case, (\ref{eq:1}), 
$X = L^2_{\mathbb{C}}({\mathbb{R}})$, 
$A$ is closure of the densely-defined, linear, positive 
symmetric and essentially self-adjoint operator in 
$L^2_{\mathbb{C}}({\mathbb{R}})$
\begin{equation*}
A_0 := 
\begin{pmatrix} 
C_{0}^{\infty}({\mathbb{R}},{\mathbb{C}}) \rightarrow 
 L^2_{\mathbb{C}}({\mathbb{R}})    \\ 
f \mapsto 
- \frac{1}{K^2} f^{\,\prime \prime}
\end{pmatrix} \, \, .
\end{equation*} \\
Application of (\ref{representation}) gives, 
see Theorem~\ref{theorem1} in the Appendix,
a representation of the solutions of (\ref{eq:1new}) given by
\begin{align} \label{representation2}
u(\tau) & = \cos(\tau \bar{p}_{\xi}) g_1 + 
\frac{\sin(\tau \bar{p}_{\xi})}{\bar{p}_{\xi}} \, g_2 \\
& = \frac{1}{2} \left[ 
g_1 \circ \left({\mathrm{id}}_{\mathbb{R}} + \frac{\tau}{K}
\right) + g_1 \circ \left( {\mathrm{id}}_{\mathbb{R}} - \frac{\tau}{K}
\right) + K \sgn(\tau) \left(
\chi_{_{[-|\tau|/K,|\tau|/K]}} * g_2
\right)
\right] \nonumber
\end{align}  
for every $g_1,g_2 \in D(A)$. Here $\circ$ denotes composition, 
${\mathrm{id}}_{\mathbb{R}}$ the identical 
function on ${\mathbb{R}}$, 
\begin{equation*}
\sgn  := 
\chi_{_{(0,\infty)}} - \chi_{_{(-\infty,0)}} \, \, ,
\end{equation*}
$*$ denotes the usual convolution 
product, and $\bar{p}_{\xi}$ is 
the closure of the densely-defined, linear, symmetric 
and essentially self-adjoint operator $p_{\xi}$ in 
$L^2_{\mathbb{C}}({\mathbb{R}})$ given by 
\begin{equation*}
\begin{pmatrix} 
C_{0}^{\infty}({\mathbb{R}},{\mathbb{C}}) \rightarrow 
 L^2_{\mathbb{C}}({\mathbb{R}})    \\ 
f \mapsto 
\frac{i}{K} f^{\,\prime}
\end{pmatrix} \, \, .
\end{equation*} 
As a side remark, 
$\bar{p}_{\xi}$ is a square root of $A$, 
i.e., $\bar{p}_{\xi}^2 = A$, that commutes with 
$A$,
but $\bar{p}_{\xi}$ is different from the positive 
square root, $A^{1/2}$,
of $A$. \\

We note that, as had to be expected, 
essentially (\ref{representation2}) is just a natural 
generalization
of (\ref{sol11wave}) to the elements of the domain $D(A)$ of 
$A$. Since also 
\begin{align*}
& \cos(\tau \bar{p}_{\xi}) g_1 + 
\frac{\sin(\tau \bar{p}_{\xi})}{\bar{p}_{\xi}} \, g_2 \\
& = \frac{1}{2} \left[ 
g_1 \circ \left({\mathrm{id}}_{\mathbb{R}} + \frac{\tau}{K}
\right) + g_1 \circ \left( {\mathrm{id}}_{\mathbb{R}} - \frac{\tau}{K}
\right) + K \sgn(\tau) \left(
\chi_{_{[-|\tau|/K,|\tau|/K]}} * g_2
\right)
\right] \nonumber
\end{align*}  
for every $g_1,g_2 \in L^2_{\mathbb{C}}({\mathbb{R}})$, see 
Theorem~\ref{theorem1} in the Appendix, 
also the generalized solutions 
of (\ref{eq:1new}) in our case are given by a 
natural generalization
of (\ref{sol11wave}) to elements of 
$L^2_{\mathbb{C}}({\mathbb{R}})$.  
\\
 
The slight disadvantage of this common functional analytic 
representation of (\ref{eq:1}) comes from the fact that the data 
for (\ref{eq:1new}) are from $L^2_{\mathbb{C}}({\mathbb{R}})$
and hence vanish in the mean for $\xi \rightarrow -\infty$,
i.e., roughly speaking, {\it vanish in the bifurcation 
point $(0,0)$ of the boundary (horizon) $H$ of $R$},

\begin{equation*}
H = \{(x^0,x^1) \in {\mathbb{R}}^2: x^1 = |x^0|\} \, \, . 
\end{equation*}

Indeed, taking into account an embedding of the Rindler 
wedge into $2$-dimensional Min\-kow\-ski space such behavior 
appears not natural. On the other hand, this behavior 
of the solutions is unsurprising since (\ref{eq:1}) 
is treated analogous to the wave equation on $2$-dimensional 
Minkowski space, where data vanish in the mean at spatial 
infinity. Also, {\it the choice of 
$L^2_{\mathbb{C}}({\mathbb{R}})$ as data space is 
related to the pursued self-adjointness of the operator 
$A$.} On the one hand, the self-adjointness of $A$ enables 
the application
spectral theorems for self-adjoint operators which 
allow the proof of well-posedness of the initial value 
problem for the evolution equation and also lead on a 
representation
of its solutions. Such theorems are generally not available 
for non-self-adjoint operators. Also, along with  
semiboundedness, the self-adjointness of $A$ leads to the 
existence of a conserved energy. \\

This disadvantage of the previous method in connection 
with wave equations on globally-hyperbolic Lorentzian 
space-times with 
horizons has been noticed before, among others by 
\cite{kaywald,beyer1,beyer4}. On the other hand, it needs 
to be stressed that this disadvantage comes into play only 
if it is known that the corresponding space-times are embedded 
in a larger space-time. For instance, the Rindler wedge can be 
embedded into $2$-dimensional Minkowski space. Such embedding 
information is intrinsically ``non-local.'' \\

{\it In addition,
in the physics literature, the solution of wave equations 
for the highest time derivative of the unknown along 
with a subsequent use of the above operator theoretic 
approach is used in most stability discussions, often
implicitly without full realization by the authors, for e.g., 
see \cite{horowitz}.}
Such use is indicated, whenever a stability 
discussion leads
to the finding of unstable eigenvalues/spectra or ``quasinormal 
frequencies.'' For an interpretation of the latter in terms of resonances 
of self-adjoint operators, see \cite{beyer2}.
{\it Apparently, the only rigorous framework  
for such discussion is provided by the spectral theory of 
operators}.
 
\section{A Functional Analytic 
Representation of (\ref{eq:2})}

In this connection, we note that the restriction $\mu|_{R}$ 
of the volume 2-form $\mu = dx^0 \wedge dx^1$ 
on $({\mathbb{R}}^2,g)$ to $R$ is given by 
\begin{equation*}
\mu|_{R} = \frac{e^{2 K \xi}}{K} \, d\tau \wedge d\xi \, \, .
\end{equation*}
Hence 
$\mu|_{R}$ induces on surfaces of constant $\tau$
the $1$-form (``measure'')
\begin{equation*}
\frac{e^{2 K \xi}}{K} \, d\xi \, \, ,
\end{equation*}
where here and in the following $\xi$ is also used as abbreviation 
for ${\textrm{id}}_{\mathbb{R}}$. 
Therefore, we choose 
\begin{equation*}
L^2_{\mathbb{C}}(\mathbb{R},e^{2 K \xi})
\end{equation*} 
as data space for our representation 
of (\ref{eq:2}).
Further, we note that the map $V$ defined by
\begin{equation*}
V f := e^{-K \xi} \cdot f 
\end{equation*} 
for every $f \in L^2_{\mathbb{C}}(\mathbb{R})$ defines 
a Hilbert space isomorphism 
\begin{equation*}
V : L^2_{\mathbb{C}}(\mathbb{R})
\rightarrow L^2_{\mathbb{C}}(\mathbb{R},e^{2 K \xi})
\end{equation*}
with inverse 
\begin{equation*}
V^{-1} = 
\begin{pmatrix} 
L^2_{\mathbb{C}}(\mathbb{R},e^{2 K \xi}) \rightarrow 
 L^2_{\mathbb{C}}({\mathbb{R}})    \\ 
f \mapsto e^{K \xi} f
\end{pmatrix} \, \, .
\end{equation*}
Employing the previous notation, the expression enclosed 
by square brackets in 
(\ref{eq:2}),
\begin{equation*}
e^{-K\xi}  
\left(- \frac{1}{K^2} \,  \partial^2_{\xi} 
\right) e^{K\xi} \, \, ,
\end{equation*}
is represented by 
\begin{equation*}
V A V^{-1} \, \, .
\end{equation*}
The latter operator is densely-defined, linear and positive 
self-adjoint.
The remaining factors $e^{-K\xi}$ in (\ref{eq:2})
are represented by
the corresponding maximal multiplication operator 
$T_{e^{-K\xi}}$ in $L^2_{\mathbb{C}}(\mathbb{R},e^{2 K \xi})$.
We note that $T_{e^{-K\xi}}$ is densely-defined, linear,  
self-adjoint and bijective. Also, $T_{e^{-K\xi}}$ leaves 
$C^{\infty}_{0}({\mathbb{R}},{\mathbb{C}})$ invariant. In this way, 
we arrive at the following functional analytic interpretation 
of (\ref{eq:2}) 
\begin{equation} \label{eq:3}
T_{e^{-K\xi}} \left(T_{e^{-K\xi}} u\right)^{\, \prime \prime} 
= - T_{e^{-K\xi}} VAV^{-1} \left(T_{e^{-K\xi}} u\right) 
\, \, ,
\end{equation}
where primes denote derivatives of paths in 
$L^2_{\mathbb{C}}(\mathbb{R},e^{2 K \xi})$. \\
 
Since $T_{e^{-K\xi}}$ is bijective, the latter equation is 
satisfied if and only if 
\begin{equation*}
\left(T_{e^{-K\xi}} u\right)^{\, \prime \prime} 
= - V A V^{-1} \left(T_{e^{-K\xi}} u\right) \, \, .
\end{equation*}
The latter equation is of type (\ref{eq:1new}).
From the results from the previous section 
as well as the invariance of 
$C^{\infty}_0({\mathbb{R}},{\mathbb{C}})$
under  
\begin{equation*}
\cos(tB) \, \, , \, \, \frac{\sin(tB)}{B} \, \, ,
\end{equation*}
follows for every 
$g_1 \in C^{\infty}_0({\mathbb{R}},{\mathbb{C}})$, 
$g_2 \in C^{\infty}_0({\mathbb{R}},{\mathbb{C}})$, 
that there is a 
unique solution to (\ref{eq:3}) satisfying  
\begin{equation*}
u(0) = g_1 \, \, , \, \, u^{\, \prime}(0) = g_2 
\end{equation*} 
and that this solution is 
given by  
\begin{align} \label{representation3}
u(\tau) & 
= T_{e^{K\xi}} \cos(\tau V \bar{p}_{\xi} V^{-1}) T_{e^{-K\xi}} g_1 + T_{e^{K\xi}} \, 
\frac{\sin(\tau V \bar{p}_{\xi} V^{-1})}{V \bar{p}_{\xi}V^{-1}}
\, T_{e^{-K\xi}} g_2 \\
& = T_{e^{K\xi}} V \cos(\tau \bar{p}_{\xi}) V^{-1} T_{e^{-K\xi}} g_1 + T_{e^{K\xi}}
V \, \frac{\sin(\tau \bar{p}_{\xi})}{\bar{p}_{\xi}} V^{-1}
\, T_{e^{-K\xi}} g_2 \nonumber \\
& = \cos(\tau \bar{p}_{\xi}) g_1 + 
\frac{\sin(\tau \bar{p}_{\xi})}{\bar{p}_{\xi}} \, g_2 \nonumber \\
& = \frac{1}{2} \left[ 
g_1 \circ \left({\mathrm{id}}_{\mathbb{R}} + \frac{\tau}{K}
\right) + g_1 \circ \left( {\mathrm{id}}_{\mathbb{R}} - \frac{\tau}{K}
\right) + K \sgn(\tau) \left(
\chi_{_{[-|\tau|/K,|\tau|/K]}} * g_2 
\right)
\right] \nonumber
\end{align}  
for every $t \in {\mathbb{R}}$, 
where $T_{e^{K\xi}}$ denotes the maximal multiplication operator 
in $L^2_{\mathbb{C}}({\mathbb{R}},e^{2 K \xi})$ corresponding to 
$e^{K\xi}$. Further, for $\tau \in {\mathbb{R}}$, by 
\begin{equation*}
f \circ \left({\mathrm{id}}_{\mathbb{R}} \pm \frac{\tau}{K}
\right) \, \, , \, \, 
\chi_{_{[-|\tau|/K,|\tau|/K]}} * f 
\end{equation*}
for every $f \in L^2_{\mathbb{C}}({\mathbb{R}},e^{2 K \xi})$,
there are defined bounded linear operators on 
$L^2_{\mathbb{C}}({\mathbb{R}},e^{2 K \xi})$, see 
Theorems~\ref{theorem2},~\ref{theorem3} in the Appendix. \\

As had to be expected, also (\ref{eq:3}) turns out as 
``label'' for the natural generalization 
of (\ref{sol11wave}) to the 
elements $L^2_{\mathbb{C}}({\mathbb{R}},e^{2 K \xi})$. \\

Still the functional analytic representations of (\ref{eq:1})
and (\ref{eq:2}) are different, since the field $u$, as a 
classical field, is observable. Also the field corresponding 
to (\ref{eq:2}) does not necessarily vanish in the bifurcation 
point $(0,0)$ of the horizon. For instance, the restriction 
of the generalized solution of (\ref{eq:2}) to 
\begin{equation*}
W_{L} := \left\{(\tau,\xi) \in {\mathbb{R}}^2: \xi < 
\frac{|\tau|}{K}\right\} 
\end{equation*}
corresponding to data 
\begin{equation*}
u(0,\cdot) = e^{-\alpha \xi} \cdot \chi_{_{(-\infty,0)}}
\, \, , \, \, u^{\prime}(0,\cdot) = 0 \, \, ,
\end{equation*}
where $\alpha < K$,
is given by 
\begin{equation} \label{specialsolution}
u(\tau,\xi) = e^{-\alpha \xi} \cosh\left(\frac{\alpha \tau}{K}
\right)
\end{equation}
for every $(\tau,\xi) \in W_{L}$. The latter leads to 
\begin{equation*}
u(x^0,x^1) = \frac{1}{2} \left\{ 
[K(x^1+x^0)]^{-\frac{\alpha}{K}} + 
[K(x^1-x^0)]^{-\frac{\alpha}{K}} 
\right\}
\end{equation*} 
for every $(x^0,x^1) \in {\mathbb{R}}^2$ satisfying 
\begin{equation*}
|x^0| < x^1 < |x^0| + \frac{1}{K} \, \, .
\end{equation*}
In addition, {\it the 
weighted $L^2$-norm that is corresponding to 
(\ref{specialsolution}) is exponentially increasing in $\tau$}.
Therefore, the solutions to (\ref{eq:3}) might be considered 
unstable, {\it on the other hand with respect to inertial 
coordinate system $(x^0,x^1)$, 
no exponential growth is visible}. Of course, such outcome 
also raises the question of coordinate dependence of the 
results.

\section{Discussion}
\label{discussion}

This brief points out that the
formulation of physically reasonable initial-boundary
value problems for wave equations in Lorentzian space-times
is not unique, i.e., that there are inequivalent such 
formulations that lead to a different outcome 
of the stability discussion of the solutions. \\
 
For the example of the wave equation on the right 
Rindler wedge in $2$-dimensional Minkowski space, this 
note gives $2$ inequivalent 
formulations of a well-posed initial-boundary 
value problem, leading to different outcomes of
the stability discussion of the solutions. Their  
construction suggest the existence of many more 
of such formulations. All what needs to be found 
is a weight $\rho$ such that the family of linear 
operators on $C^{\infty}_{0}({\mathbb{R}},{\mathbb{C}})$, 
defined by  
\begin{equation*}
\frac{1}{2} \left[ 
\varphi \circ \left({\mathrm{id}}_{\mathbb{R}} + \frac{\tau}{K}
\right) + \varphi \circ \left( {\mathrm{id}}_{\mathbb{R}} - \frac{\tau}{K}
\right)\right] \, \, , \, \, 
\frac{K}{2} \sgn(\tau) \left(
\chi_{_{[-|\tau|/K,|\tau|/K]}} * \varphi \right)
\end{equation*}
for every $\varphi \in 
C^{\infty}_{0}({\mathbb{R}},{\mathbb{C}})$, 
where $\tau$ runs through the elements of ${\mathbb{R}}$,
from the classical solution formula (\ref{sol11wave}),
lead on bounded linear operators in 
$L^2_{\mathbb{C}}({\mathbb{R}},\rho)$. For such $\rho$, 
according to the linear 
extension theorem, every member of the family has 
a unique extension to a bounded linear operator 
on $L^2({\mathbb{R}},\rho)$, and the resulting family 
of bounded linear operators on $L^2({\mathbb{R}},\rho)$
can be viewed as resulting from a functional analytic 
interpretation 
of (\ref{waveequation}). \\

The used methods can be generalized to
wave equations on stationary globally hyperbolic space-times 
with horizons in higher dimensions, such 
Schwarzschild and Kerr space-times.

\section*{Acknowledgments}
 
H.B. is thankful for the hospitality and support
of 
the `Division for Theoretical Astrophysics' (TAT,  
K. Kokkotas) of the Institute for Astronomy and 
Astrophysics   
at the Eber\-hard-Karls-University Tuebingen. 
This work was supported by SNI-M\'exico, and the 
SFB/Trans\-re\-gi\-o~$7$ on 
``Gravitational Wave Astronomy'' of the German Science 
Foundation (DFG).
 
\section{Appendix}

\begin{ass}
In the following, we denote for every $n \in {\mathbb{N}}^{*}$ 
by $v^n$ the Lebesgue measure on 
${\mathbb{R}}^n$,  
${\cal S}_{\mathbb{C}}({\mathbb{R}})$ the space 
of rapidly decreasing test functions on ${\mathbb{R}}$,
${\mathrm{id}}_{\mathbb{R}}$ the identical 
function on ${\mathbb{R}}$, 
$F_0 : {\cal S}_{\mathbb{C}}({\mathbb{R}}) \rightarrow 
{\cal S}_{\mathbb{C}}({\mathbb{R}})$ the Fourier transformation, 
defined by  
\begin{equation*}
F_0(f)(k) := (2 \pi)^{-1/2} \int_{\mathbb{R}} e^{- i k . 
{\textrm{id}}_{\mathbb{R}}} f dv^1 
\end{equation*}
for every $k \in {\mathbb{R}}$ and $f 
\in {\cal S}_{\mathbb{C}}({\mathbb{R}})$, 
$F_1 : L^1_{\mathbb{C}}({\mathbb{R}}) \rightarrow
C_{\infty}({\mathbb{R}},{\mathbb{C}})$ 
the Fourier transformation defined by 
\begin{equation*}
F_1(f)(k) := (2 \pi)^{-1/2} \int_{\mathbb{R}} e^{- i k . 
{\textrm{id}}_{\mathbb{R}}} f dv^1 
\end{equation*}
for every $k \in {\mathbb{R}}$ and $f \in 
L^1_{\mathbb{C}}({\mathbb{R}})$,
and by 
$F_2$ the unitary Fourier 
transformation that is induced by $F_0$ on 
$L^2_{\mathbb{C}}({\mathbb{R}})$. Further, for every complex-valued
function that is a.e. defined and measurable on 
${\mathbb{R}}$, we denote by $T_{g}$ the maximal multiplication
operator with $g$ in $L^2_{\mathbb{C}}({\mathbb{R}})$.
Finally, $A$ is closure of the densely-defined, linear, positive 
symmetric and essentially self-adjoint operator in 
$L^2_{\mathbb{C}}({\mathbb{R}})$
\begin{equation*}
A_0 := 
\begin{pmatrix} 
C_{0}^{\infty}({\mathbb{R}},{\mathbb{C}}) \rightarrow 
 L^2_{\mathbb{C}}({\mathbb{R}})    \\ 
f \mapsto 
- \frac{1}{K^2} \, f^{\,\prime \prime}
\end{pmatrix} \, \, , 
\end{equation*}
and $\bar{p}_{\xi}$ is 
the closure of the densely-defined, linear, symmetric 
and essentially self-adjoint operator $p_{\xi}$ in 
$L^2_{\mathbb{C}}({\mathbb{R}})$ given by 
\begin{equation*}
p_{\xi} := 
\begin{pmatrix} 
C_{0}^{\infty}({\mathbb{R}},{\mathbb{C}}) \rightarrow 
 L^2_{\mathbb{C}}({\mathbb{R}})    \\ 
f \mapsto 
\frac{i}{K} f^{\,\prime}
\end{pmatrix} \, \, .
\end{equation*}
In particular, the spectrum of $\bar{p}_{\xi}$ is given by 
${\mathbb{R}}$, and $\bar{p}_{\xi}$ is the infinitesimal 
generator of the strongly continuous one-parameter 
unitary group $U : {\mathbb{R}} \rightarrow 
L(L^2_{\mathbb{C}}({\mathbb{R}}),L^2_{\mathbb{C}}({\mathbb{R}}))$
given by 
\begin{equation*}
U(\tau) := \begin{pmatrix} 
L^2_{\mathbb{C}}({\mathbb{R}}) \rightarrow 
 L^2_{\mathbb{C}}({\mathbb{R}})    \\ 
f \mapsto 
f \circ \left({\textrm{id}}_{\mathbb{R}} - \frac{\tau}{K} \right)
\end{pmatrix}  \left( = e^{ i \tau \bar{p}_{\xi}}  \right) 
\end{equation*}
for every $\tau \in {\mathbb{R}}$.
\end{ass}

\begin{theorem} \label{theorem1}
Then  
\begin{equation*}
\bar{p}_{\xi}^2 = A \, \, , 
\end{equation*}
$\bar{p}_{\xi}$ commutes with $A$ in the strong sense,
and  
\begin{align} \label{representation2a}
\cos(\tau \bar{p}_{\xi}) g_1 & = \frac{1}{2} \left[ 
g_1 \circ \left({\mathrm{id}}_{\mathbb{R}} + \frac{\tau}{K}
\right) + g_1 \circ \left( {\mathrm{id}}_{\mathbb{R}} - \frac{\tau}{K}
\right)\right] \, \, , \\
\frac{\sin(\tau \bar{p}_{\xi})}{\bar{p}_{\xi}} \, g_2 
& = \frac{K}{2} \, \sgn(\tau) \left(
\chi_{_{[-|\tau|/K,|\tau|/K]}} * g_2
\right) \nonumber
\end{align}  
for every $g_1,g_2 \in L^2_{\mathbb{C}}({\mathbb{R}})$, where 
$\circ$ denotes composition,  
\begin{equation*}
\sgn  := 
\chi_{_{(0,\infty)}} - \chi_{_{(-\infty,0)}} \, \, ,
\end{equation*}
and $*$ denotes the usual convolution 
product.
\end{theorem}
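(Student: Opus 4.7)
The plan is to diagonalize both $\bar{p}_{\xi}$ and $A$ simultaneously via the Fourier transform $F_{2}$ and transfer every assertion back to the multiplication-operator picture. First, on ${\cal S}_{\mathbb{C}}({\mathbb{R}})$, which is a common core for $p_{\xi}$ and $A_{0}$ and is invariant under $F_{0}$, a direct computation with the convention fixed in the assumption gives $F_{0}\,p_{\xi}\,F_{0}^{-1} = T_{-{\mathrm{id}}_{\mathbb{R}}/K}$ and $F_{0}\,A_{0}\,F_{0}^{-1} = T_{{\mathrm{id}}_{\mathbb{R}}^{2}/K^{2}}$ on the Schwartz space. Passing to closures and using unitarity of $F_{2}$, I identify $F_{2}\,\bar{p}_{\xi}\,F_{2}^{-1}$ with the maximal multiplication operator by $-{\mathrm{id}}_{\mathbb{R}}/K$ and $F_{2}\,A\,F_{2}^{-1}$ with the maximal multiplication operator by ${\mathrm{id}}_{\mathbb{R}}^{2}/K^{2}$. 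Since $(-{\mathrm{id}}_{\mathbb{R}}/K)^{2} = {\mathrm{id}}_{\mathbb{R}}^{2}/K^{2}$ on the level of symbols and maximal multiplication operators with real symbols commute strongly, this yields $\bar{p}_{\xi}^{\,2} = A$ together with strong commutativity of $\bar{p}_{\xi}$ and $A$.

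Next, for the cosine identity in (\ref{representation2a}), I invoke the assumption that $U(\tau) = e^{i\tau \bar{p}_{\xi}}$. The functional calculus then gives $\cos(\tau \bar{p}_{\xi}) = \tfrac{1}{2}(U(\tau) + U(-\tau))$ as a bounded operator on $L^{2}_{\mathbb{C}}({\mathbb{R}})$, and substituting the explicit translation form of $U(\tau)$ from the assumption produces the claimed formula for every $g_{1} \in L^{2}_{\mathbb{C}}({\mathbb{R}})$.

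For the sine-over-$\bar{p}_{\xi}$ identity, I proceed by matching Fourier multipliers. Under $F_{2}$, the operator $\sin(\tau \bar{p}_{\xi})/\bar{p}_{\xi}$ is multiplication by the bounded function $k \mapsto K\sin(\tau k/K)/k$ (with the usual convention at $k = 0$). On the other side, Young's inequality $\|\chi_{_{[-|\tau|/K,|\tau|/K]}} * g_{2}\|_{2} \leq (2|\tau|/K)\,\|g_{2}\|_{2}$ shows that $g_{2} \mapsto \tfrac{K}{2}\sgn(\tau)\,\chi_{_{[-|\tau|/K,|\tau|/K]}} * g_{2}$ is a bounded linear operator on $L^{2}_{\mathbb{C}}({\mathbb{R}})$, so it suffices to verify the identity on the dense subspace ${\cal S}_{\mathbb{C}}({\mathbb{R}})$ and extend by continuity. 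On Schwartz data I use $F_{1}(\chi_{_{[-a,a]}})(k) = \sqrt{2/\pi}\,\sin(ka)/k$ together with the convolution theorem $F_{0}(f*g) = \sqrt{2\pi}\,F_{0}(f)\,F_{0}(g)$ to see that $F_{2}$ sends $\chi_{_{[-|\tau|/K,|\tau|/K]}} * g_{2}$ to $k \mapsto (2\sin(k|\tau|/K)/k)\,F_{2}(g_{2})(k)$; multiplication by $\tfrac{K}{2}\sgn(\tau)$ produces exactly the multiplier $K\sin(\tau k/K)/k$, which matches the Fourier picture of $\sin(\tau \bar{p}_{\xi})/\bar{p}_{\xi}$.

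The only real obstacle is bookkeeping: keeping the $(2\pi)^{\pm 1/2}$ factors consistent between the $F_{0}$, $F_{1}$, $F_{2}$ conventions stated in the assumption, and justifying that the unitary conjugation of a closure is again the closure of the conjugated operator (which is automatic because $F_{2}$ maps ${\cal S}_{\mathbb{C}}({\mathbb{R}})$ onto itself and this subspace is a core for the relevant multiplication operators). Everything else reduces to standard manipulations on Schwartz functions followed by density arguments in $L^{2}_{\mathbb{C}}({\mathbb{R}})$.
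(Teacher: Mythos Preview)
Your argument is correct, but it proceeds along a different line than the paper's in almost every step except the cosine formula (where both of you split $\cos(\tau\bar{p}_{\xi})$ as $\tfrac12(U(\tau)+U(-\tau))$ and read off translations). For $\bar{p}_{\xi}^{\,2}=A$ the paper avoids Fourier conjugation entirely: it observes that the square of a self-adjoint operator is self-adjoint, that $\bar{p}_{\xi}^{\,2}\supset A_{0}$, and that essential self-adjointness of $A_{0}$ forces $\bar{p}_{\xi}^{\,2}=\bar A_{0}=A$. For strong commutation the paper checks directly on $C_{0}^{\infty}$ that $e^{i\tau\bar{p}_{\xi}}A_{0}\subset A_{0}e^{i\tau\bar{p}_{\xi}}$ via the chain rule for translations, and then passes to the closure. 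For the sine kernel the paper does not match Fourier multipliers; instead it writes $\sin(|\tau|\bar{p}_{\xi})/\bar{p}_{\xi}=F_{1}\!\big(\tfrac12\chi_{[-|\tau|,|\tau|]}\big)(\bar{p}_{\xi})$, pairs against test functions $h\in C_{0}^{\infty}$, and uses Fubini (after an explicit $v^{2}$-summability check) to rewrite $\int_{-|\tau|}^{|\tau|}\braket{e^{it\bar{p}_{\xi}}g|h}\,dt$ as $\braket{\chi_{[-|\tau|/K,|\tau|/K]}*g|h}$, concluding by density. Your route has the advantage of being uniform: one conjugation by $F_{2}$ reduces every claim to identities between explicit symbols, and Young's inequality handles boundedness cleanly. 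The paper's route, in exchange, needs no care about which of $F_{0},F_{1},F_{2}$ is in play and makes the connection between the functional calculus and the translation group more visible; its Fubini step also yields the $L^{2}$-membership of $\chi*g$ as a by-product rather than requiring it as input.
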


\begin{proof}
As a square of a densely-defined, linear and self-adjoint 
operator, $\bar{p}_{\xi}^2$ is a densely-defined, linear 
and self-adjoint operator in $L^2_{\mathbb{C}}({\mathbb{R}})$.
Obviously, it follows that $\bar{p}_{\xi}^2 \supset A_0$ 
and hence, since $A_0$ is in particular essentially self-adjoint,
that $\bar{p}_{\xi}^2 = \bar{A}_0 = A$. \\

Further, it follows that
\begin{equation*}
\exp(i \tau \bar{p}_{\xi}) A_0 \subset A_0 
\exp(i \tau \bar{p}_{\xi})  \, \, . 
\end{equation*}
For the proof, we note that 
\begin{align*}
& \exp(i \tau \bar{p}_{\xi}) A_0 f = 
- \frac{1}{K^2}  \, \exp(i \tau \bar{p}_{\xi}) 
f^{\,\prime \prime} = - \frac{1}{K^2}  \left( 
f^{\,\prime \prime} \circ \left({\textrm{id}}_{\mathbb{R}} - \frac{\tau}{K} \right)
\right) \\
& = - \frac{1}{K^2}  \left( 
f \circ \left({\textrm{id}}_{\mathbb{R}} - \frac{\tau}{K} \right)
\right)^{\,\prime \prime} = A_0 \left( 
f \circ \left({\textrm{id}}_{\mathbb{R}} - \frac{\tau}{K} \right)
\right) = A_0 \exp(i \tau \bar{p}_{\xi}) f \, \, .
\end{align*}
for $\tau \in {\mathbb{R}}$ and $f \in 
C_{0}^{\infty}({\mathbb{R}},{\mathbb{C}})$.
Hence it follows also that 
\begin{equation*}
\exp(i \tau \bar{p}_{\xi}) A \subset A
\exp(i \tau \bar{p}_{\xi})   
\end{equation*}
which implies that 
$\bar{p}_{\xi}$ commutes with $A$ in the strong sense. \\

For the proof of (\ref{representation2a}), we note that
for $\tau \in [0,\infty)$, $g \in L^2_{\mathbb{C}}({\mathbb{R}})$ 
and $h \in C_{0}^{\infty}({\mathbb{R}},{\mathbb{C}})$ 
\begin{equation} \label{convolution1}
\chi_{_{[-\tau,\tau]}} * g 
\in C_{\infty}({\mathbb{R}},{\mathbb{C}}) \cap 
L^2_{\mathbb{C}}({\mathbb{R}}) 
\end{equation} 
and that 
\begin{equation} \label{convolution2}
\left(\chi_{_{[-\tau,\tau]}} \circ p_2\right) \cdot 
\left(g \circ (p_1 - p_2)\right) \cdot \left(h \circ p_1\right)
\end{equation} 
is $v^2$-summable, where $p_1,p_2$ denote the coordinate 
projections of ${\mathbb{R}}^2$ onto the first and 
second coordinate, 
respectively. For the proof, let 
$\tau \in [0,\infty)$, $g \in L^2_{\mathbb{C}}({\mathbb{R}})$ and 
$h \in C_{0}^{\infty}({\mathbb{R}},{\mathbb{C}})$. Since 
$\chi_{_{[-\tau,\tau]}} \in L^2_{\mathbb{C}}({\mathbb{R}})$ and 
\begin{equation*}
F_2 \, \chi_{_{[-\tau,\tau]}} = \sqrt{\frac{2}{\pi}} \, 
\widehat{\frac{\sin(\tau . {\textrm{id}}_{\mathbb{R}})}{{\textrm{id}}_{\mathbb{R}}}} \, \, , 
\end{equation*}  
where 
\begin{equation*}
\widehat{\frac{\sin(\tau . {\textrm{id}}_{\mathbb{R}})}{{\textrm{id}}_{\mathbb{R}}}}
\end{equation*}
denotes the extension of 
\begin{equation*}
\frac{\sin(\tau . {\textrm{id}}_{\mathbb{R}})}{{\textrm{id}}_{\mathbb{R}}} \in C({\mathbb{R}}^{*},{\mathbb{C}})
\end{equation*}
to an element of $C_{\infty}({\mathbb{R}},{\mathbb{C}})$, it follows
that 
\begin{equation*}
\chi_{_{[-\tau,\tau]}} * g = F_1\left( \sqrt{\frac{2}{\pi}} \, 
\widehat{\frac{\sin(\tau . {\textrm{id}}_{\mathbb{R}})}{{\textrm{id}}_{\mathbb{R}}}} \cdot \left[ 
(F_2g) \circ (- {\textrm{id}}_{\mathbb{R}}) 
\right]\right) \, \, .
\end{equation*}
Since in particular, the argument of $F_1$ in the 
latter equality is also contained 
in $L^2_{\mathbb{C}}({\mathbb{R}})$, (\ref{convolution1})
follows. Further, if $N \in {\mathbb{N}}^{*}$ is such that 
$\textrm{supp}(h) \subset [-N,N]$, then 
\begin{align*}
& \left(\chi_{_{[-\tau,\tau]}} \circ p_2\right) \cdot 
(g \circ (p_1-p_2)) \cdot (h \circ p_1) \\
& =
\left(\chi_{_{[-\tau,\tau]}} \circ p_2\right) \cdot
\left[
\left(\chi_{_{[-(N + \tau),(N + \tau)]}} \cdot g \right)
\circ (p_1 - p_2)
\right] \cdot (h \circ p_1) \, \, .
\end{align*}
Since $\chi_{_{[-(N + \tau),(N + \tau)]}} \in 
L^2_{\mathbb{C}}({\mathbb{R}})$, it follows that 
$\chi_{_{[-(N + \tau),(N + \tau)]}} \cdot g \in 
L^1_{\mathbb{C}}({\mathbb{R}})$ and hence, since also
$\chi_{_{[-\tau,\tau]}} \in 
L^1_{\mathbb{C}}({\mathbb{R}})$, it follows from 
a known result in connection with convolution products
that 
\begin{equation*}
\left(\chi_{_{[-\tau,\tau]}} \circ p_2\right) \cdot
\left[
\left(\chi_{_{[-(N + \tau),(N + \tau)]}} \cdot g \right)
\circ (p_1 - p_2)
\right]
\end{equation*} 
is $v^2$-summable. Finally, since $h \circ p_1$ is
bounded and continuous, it follows from integration theory 
that (\ref{convolution2}) is $v^2$-summable. \\

Finally, we obtain from direct calculation that 
\begin{equation*}
\cos(\tau \bar{p}_{\xi}) f = \frac{1}{2} 
\left[
\exp(i \tau \bar{p}_{\xi}) + \exp(-i \tau \bar{p}_{\xi})
\right]f  = \frac{1}{2} 
\left[f \circ \left({\textrm{id}}_{\mathbb{R}} - \frac{\tau}{K} \right) + f \circ \left({\textrm{id}}_{\mathbb{R}} + \frac{\tau}{K} \right) 
\right] 
\end{equation*}
as well as 
\begin{align*}
& \braket{\frac{\sin(\tau \bar{p}_{\xi})}{\bar{p}_{\xi}}\,g|h} =
\textrm{sgn}(\tau)  
\braket{\frac{\sin(|\tau| \bar{p}_{\xi})}{\bar{p}_{\xi}}\,g|h}
= \textrm{sgn}(\tau)
\braket{g|F_1\!\left(\frac{1}{2}\, 
\chi_{_{[-|\tau|,|\tau|]}}\right)\!(\bar{p}_{\xi})h} \\
& = \frac{1}{2} \, \textrm{sgn}(\tau) 
\int_{-|\tau|}^{|\tau|} \braket{\exp(it\bar{p}_{\xi})g|h} dt
= \frac{K}{2} \, \textrm{sgn}(\tau) 
\int_{-|\tau|/K}^{|\tau|/K} \braket{g \circ ({\textrm{id}}_{\mathbb{R}} - t)|h} dt \\
& = \frac{K}{2} \, \textrm{sgn}(\tau)
\int_{{\mathbb{R}}^2}
\left(\chi_{_{[-|\tau|/K,|\tau|/K]}} \circ p_2\right) \cdot 
(g^{*} \circ (p_1-p_2)) \cdot (h \circ p_1) dv^2 \\
& = \frac{K}{2} \, \textrm{sgn}(\tau) 
\int_{\mathbb{R}} \left(\chi_{_{[-|\tau|/K,|\tau|/K]}} * g 
\right)^{*} \cdot h dv^1 = 
\braket{\frac{K}{2} \, \textrm{sgn}(\tau) \left(\chi_{_{[-|\tau|/K,|\tau|/K]}} * g 
\right)|h} \, \, , 
\end{align*}
for $\tau \in {\mathbb{R}}$, $f, g \in L^2_{\mathbb{C}}({\mathbb{R}})$ and $h \in C_{0}^{\infty}({\mathbb{R}},{\mathbb{C}})$. Since
$C_{0}^{\infty}({\mathbb{R}},{\mathbb{C}})$ is dense in 
$L^2_{\mathbb{C}}({\mathbb{R}})$, from the latter follows that 
\begin{equation*}
\frac{\sin(\tau \bar{p}_{\xi})}{\bar{p}_{\xi}}\,g = 
\frac{K}{2} \, \textrm{sgn}(\tau) 
\chi_{_{[-|\tau|/K,|\tau|/K]}} * g \, \, .
\end{equation*}
\end{proof}

\begin{ass}
In addition, we denote by $B^{\pm 1/2}$ the 
multiplikation operator by $T_{e^{\mp K\xi}}$ in 
$L^2_{\mathbb{C}}({\mathbb{R}})$. Further, 
we denote by $U_{\mathbb{C}}^s({\mathbb{R}})$
the set of bounded complex-valued functions on ${\mathbb{R}}$
with component functions that are strongly measurable, in 
the sense that they are everywhere
on ${\mathbb{R}}$ the limit of a sequence of 
step functions. Finally, ${\cal A}$ denotes the set of 
all complex-valued functions $f$ 
on ${\mathbb{R}} \times [-1,0]$
satisfying
\begin{enumerate}
\item $f(\cdot,0) , f(\cdot,1) \in U_{\mathbb{C}}^s({\mathbb{R}})$, 
\item a.e. on ${\mathbb{R}}$:
\begin{equation*}
\lim_{y \rightarrow 0-} f(\cdot,y) =  f(\cdot,0) \, \, , \, \, 
\lim_{y \rightarrow (-1)+} f(\cdot,y) =  f(\cdot,-1) \, \, ,
\end{equation*} 
\item $f|_{{\mathbb{R}} \times (-1,0)}$ is 
holomorphic, and there 
are $C \geq 0$, $N \in {\mathbb{N}}$ such that 
\begin{equation*}
|f(z)| \leq C (1+|z|)^N
\end{equation*}
for every $z \in {\mathbb{R}} \times [-1,0]$.
\end{enumerate} 
\end{ass}

\begin{theorem} \label{theorem2}
For every $f \in {\cal A}$
\begin{equation*}
C^{\infty}_{0}({\mathbb{R}},{\mathbb{C}}) 
\subset D(B^{-1/2} f(\bar{p}_{\xi},0)B^{1/2}) \, \, 
\textrm{and} \, \, \overline{B^{-1/2} f(\bar{p}_{\xi},0)B^{1/2}} 
= f(\bar{p}_{\xi},-1) \, \, ,
\end{equation*}
where the overline on top of the expression containing 
$f(\bar{p}_{\xi},0)$ indicates closure in the operator norm of
$L(L^2_{\mathbb{C}}({\mathbb{R}}),L^2_{\mathbb{C}}({\mathbb{R}}))$.
\end{theorem}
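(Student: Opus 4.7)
The strategy is to diagonalize $\bar p_\xi$ by the Fourier transform $F_2$ and to interpret the conjugation by $B^{\pm 1/2}=T_{e^{\mp K\xi}}$ as holomorphic shifts of the Fourier variable by $\mp iK$. Because $f$ is holomorphic on the open strip $\mathbb{R}\times(-1,0)$ with polynomial growth, these shifts convert the upper boundary trace $f(\cdot,0)$ into the lower boundary trace $f(\cdot,-1)$, which is precisely the content of the theorem.

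For the setup, one has $F_2\bar p_\xi F_2^{-1}=T_{-k/K}$, hence by the Borel functional calculus $g(\bar p_\xi)=F_2^{-1}T_{g(-\cdot/K)}F_2$ for every bounded Borel $g:\mathbb{R}\to\mathbb{C}$; in particular $f(\bar p_\xi,0)$ and $f(\bar p_\xi,-1)$ are bounded, since $f(\cdot,0),f(\cdot,-1)\in U_{\mathbb{C}}^{s}(\mathbb{R})$. For $\phi\in C_0^\infty(\mathbb{R},\mathbb{C})$, the Paley--Wiener theorem ensures that $\hat\phi:=F_2\phi$ extends to an entire function of exponential type satisfying, for each $N\in\mathbb{N}$ and $M>0$, a bound $|\hat\phi(k+is)|\le C_{N,M}(1+|k|)^{-N}$ uniformly in $|s|\le M$, and a direct integration gives the shift identity $F_2(e^{-K\xi}\phi)(k)=\hat\phi(k-iK)$.

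The core computation goes as follows. Fix $\phi\in C_0^\infty(\mathbb{R},\mathbb{C})$ and set $\psi:=f(\bar p_\xi,0)B^{1/2}\phi$, an element of $L^2_{\mathbb{C}}(\mathbb{R})$. Its Fourier transform on the real axis is
\begin{equation*}
\hat\psi(k)=f(-k/K,0)\,\hat\phi(k-iK).
\end{equation*}
The substitution $z=-k/K$ maps the strip $\{0<\mathrm{Im}(k)<K\}$ biholomorphically onto the interior of $\mathbb{R}\times[-1,0]$ where $f$ is holomorphic; combined with the entirety of $\hat\phi$, this produces a holomorphic extension of $\hat\psi$ to $\{0<\mathrm{Im}(k)<K\}$, continuous on the closed strip, whose upper-edge trace is
\begin{equation*}
\hat\psi(k+iK)=f(-k/K,-1)\,\hat\phi(k).
\end{equation*}
The polynomial growth of $f$ together with the rapid decay of $\hat\phi$ on horizontal lines makes this extension lie in $L^2$ on every line $\{\mathrm{Im}(k)=s\}$, $s\in[0,K]$, with $L^2$-norm bounded uniformly in $s$; a standard Plancherel contour-shift argument then delivers $e^{K\xi}\psi\in L^2_{\mathbb{C}}(\mathbb{R})$ together with
\begin{equation*}
F_2(e^{K\xi}\psi)(k)=\hat\psi(k+iK)=f(-k/K,-1)\hat\phi(k)=F_2(f(\bar p_\xi,-1)\phi)(k),
\end{equation*}
so that $B^{-1/2}f(\bar p_\xi,0)B^{1/2}\phi=f(\bar p_\xi,-1)\phi$.

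Taken together, these two conclusions are exactly $C_0^\infty(\mathbb{R},\mathbb{C})\subset D(B^{-1/2}f(\bar p_\xi,0)B^{1/2})$ and the identity of $B^{-1/2}f(\bar p_\xi,0)B^{1/2}$ with the bounded operator $f(\bar p_\xi,-1)$ on a dense subspace. The linear extension theorem then yields $\overline{B^{-1/2}f(\bar p_\xi,0)B^{1/2}}=f(\bar p_\xi,-1)$. The principal obstacle is to make the contour shift entirely rigorous: one must verify that the holomorphic extension of $\hat\psi$ has $L^2$ traces on every horizontal line in the closed strip $\{0\le \mathrm{Im}(k)\le K\}$ with uniform $L^2$ control, and that Plancherel applied on the shifted line indeed yields the Fourier transform of $e^{K\xi}\psi$. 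This step hinges decisively on condition 3 in the definition of $\mathcal{A}$ (the polynomial bound on $f$) together with the Paley--Wiener rapid decay of $\hat\phi$.
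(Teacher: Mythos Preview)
Your approach is correct and rests on the same underlying mechanism as the paper's proof—the complex shift in the spectral variable afforded by Paley--Wiener together with the strip holomorphy and polynomial bound on $f$—but the technical implementation differs. The paper proceeds by duality: it first shows that $C_0^\infty(\mathbb{R},\mathbb{C})$ is a core for the self-adjoint operator $B^{-1/2}$, so that $f(\bar p_\xi,0)B^{1/2}h_1 \in D(B^{-1/2})$ reduces to boundedness of the form $h_2 \mapsto \langle f(\bar p_\xi,0)B^{1/2}h_1 \mid B^{-1/2}h_2\rangle$ on $C_0^\infty$; this form is rewritten via the spectral calculus as an integral of $f(\cdot,0)$ against the Fourier transform of the compactly supported smooth matrix coefficient $\tau \mapsto \langle h_2 \mid e^{i\tau\bar p_\xi}h_1\rangle$, and a rectangular Cauchy contour shift (proved as a preliminary lemma from Cauchy's theorem plus dominated convergence) converts $f(\cdot,0)$ into $f(\cdot,-1)$. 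Your route instead works directly on $\hat\psi$ and invokes a Hardy-space-type characterization of exponentially weighted $L^2$ to conclude $e^{K\xi}\psi \in L^2$. The paper's version has the advantage that the contour shift is always performed against a single entire function of rapid decay on every horizontal line, so no $L^2$ boundary-value theory for the strip is needed; in your version the claim that $\hat\psi$ is ``continuous on the closed strip'' is slightly off (condition~2 in the definition of $\mathcal A$ gives only a.e.\ boundary limits of $f$), and the ``standard Plancherel contour-shift argument'' you invoke, while valid, is essentially the Paley--Wiener theorem for a strip and would, if unpacked, reduce to the same rectangular Cauchy lemma the paper isolates. None of this is a real obstruction—your argument goes through once these details are supplied—but the paper's form-based route stays with lighter tools.
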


\begin{proof}
In a first step, we prove an auxiliary result. For this purpose,
let $f \in {\cal A}$, $\varphi \in C({\mathbb{R}} \times [-1,0],{\mathbb{C}})$ such that 
$\varphi|_{{\mathbb{R}} \times (-1,0)}$ is holomorphic and 
$({\textrm{id}}_{\mathbb{C}})^{k} \cdot \varphi$ is bounded 
for every $k \in {\mathbb{N}}$. Then, 
\begin{equation*}
\int_{\mathbb{R}} f(\cdot,0) \cdot \varphi(\cdot,0) \, dv^1 =
\int_{\mathbb{R}} f(\cdot,1) \cdot \varphi(\cdot,1) \, dv^1
\, \, .
\end{equation*} 
The proof is a straighforward application of Cauchy's 
integral theorem for rectangular paths and Lebesgue's 
dominated convergence theorem. \\

Further, we note that from a well-known theorem of Paley-Wiener, 
see e.g., \cite{reedsimon} Vol. I, Sect IX.3, 
follows that for any $g \in 
C^{\infty}_{0}({\mathbb{R}},{\mathbb{C}})$ the corresponding
function 
\begin{equation*}
\begin{pmatrix} 
{\mathbb{R}} \times [-1,0] \rightarrow {\mathbb{C}} \\ 
z \mapsto \int_{\mathbb{R}} e^{-i z . {\textrm{id}}_{\mathbb{R}}}
\cdot g \, dv^1 
\end{pmatrix} 
\end{equation*}
has the properties that were required for $\varphi$ in the 
previous auxiliary result. \\

As a further auxiliary result, we note that 
\begin{equation*}
\begin{pmatrix} 
{\mathbb{R}} \rightarrow {\mathbb{C}} \\ 
\tau \mapsto \braket{h_1|\exp(i \tau \bar{p}_{\xi}) h_2}
\end{pmatrix}
\in C^{\infty}_{0}({\mathbb{R}},{\mathbb{C}})
\end{equation*} 
for every $h_1, h_2 \in C^{\infty}_{0}({\mathbb{R}},{\mathbb{C}})$.
The latter follows from the identities 
\begin{align*}
& \braket{h_1|\exp(i \tau \bar{p}_{\xi}) h_2} =
\braket{h_1|h_2 \circ \left({\textrm{id}}_{\mathbb{R}} - 
\frac{\tau}{K}\right)} = 
\int_{\mathbb{R}} h^{*}_1 \left[h_2 \circ \left({\textrm{id}}_{\mathbb{R}} - 
\frac{\tau}{K}\right) \right] dv^1 \\
& = \frac{1}{K} \, 
\int_{\mathbb{R}} \left(h^{*}_1 \circ \left(K^{-1} . 
{\textrm{id}}_{\mathbb{R}}\right)\right) 
\left[h_2 \circ \left(- K^{-1} . 
{\textrm{id}}_{\mathbb{R}}\right) \circ \left(\tau - 
{\textrm{id}}_{\mathbb{R}} \right) \right] dv^1 \\
& = \frac{1}{K} \left[ 
\left(h^{*}_1 \circ \left(K^{-1} . 
{\textrm{id}}_{\mathbb{R}}\right)\right) * 
\left(h_2 \circ \left(- K^{-1} . 
{\textrm{id}}_{\mathbb{R}}\right)\right) 
\right](\tau) \\
& = \frac{\sqrt{2 \pi}}{K} \, F_0 \left[ 
F_0^{-1} \left(h^{*}_1 \circ \left(K^{-1} . 
{\textrm{id}}_{\mathbb{R}}\right)\right) \cdot  
F_0^{-1} \left(h_2 \circ \left(-K^{-1} . 
{\textrm{id}}_{\mathbb{R}}\right)\right) 
\right](\tau)
\end{align*}
which show that 
\begin{equation*}
\begin{pmatrix} 
{\mathbb{R}} \rightarrow {\mathbb{C}} \\ 
\tau \mapsto \braket{h_1|\exp(i \tau \bar{p}_{\xi}) h_2}
\end{pmatrix}
\in C_{0}({\mathbb{R}},{\mathbb{C}}) \cap 
{\cal S}_{\mathbb{C}}({\mathbb{R}}) 
= C_{0}^{\infty}({\mathbb{R}},{\mathbb{C}}) \, \, .
\end{equation*}
As a final auxilary result, we note that 
$C_{0}^{\infty}({\mathbb{R}},{\mathbb{C}})$ is a core 
for $B^{-1/2}$. The latter follows from the facts 
that 
$C_{0}^{\infty}({\mathbb{R}},{\mathbb{C}})$ is dense 
in $L^2_{\mathbb{C}}({\mathbb{R}})$, contained 
in $D(B^{-1/2})$ and invariant under the 
strongly continuous one-parameter 
unitary group 
\begin{equation*}
\begin{pmatrix} 
{\mathbb{R}} \rightarrow 
L(L^2_{\mathbb{C}}({\mathbb{R}}),L^2_{\mathbb{C}}({\mathbb{R}})) \\ 
t \mapsto T_{\exp(it.\exp{(K.{\textrm{id}}_{\mathbb{R}})})} 
\end{pmatrix}
\end{equation*}
that is generated by $B^{-1/2}$, where for every 
$t \in {\mathbb{R}}$, the corresponding 
$T_{\exp(it.\exp{(K.{\textrm{id}}_{\mathbb{R}})})}$ denotes the maximal multiplication operator 
in $L^2_{\mathbb{C}}({\mathbb{R}})$ by the function 
$\exp(it.\exp{(K.{\textrm{id}}_{\mathbb{R}})})$. 
\\

For the proof of our main result, let $f \in {\cal A}$. Since
$B^{-1/2}$ is in particular self-adjoint and 
$C_{0}^{\infty}({\mathbb{R}},{\mathbb{C}})$ is a core 
for $B^{-1/2}$, it follows for $h_1 \in 
C_{0}^{\infty}({\mathbb{R}},{\mathbb{C}})$
that 
\begin{equation*}
f({\bar{p}}_{\xi},0) B^{1/2} h_1 \in D(B^{-1/2})
\end{equation*}
if and only if the linear form  
\begin{equation*}
\braket{f({\bar{p}}_{\xi},0) B^{1/2} h_1|B^{-1/2}\cdot}|_{C_{0}^{\infty}({\mathbb{R}},{\mathbb{C}})}
\end{equation*}
is bounded. With the help of the previous auxiliar results, we 
conclude for 
$h_1,h_2 \in C_{0}^{\infty}({\mathbb{R}},{\mathbb{C}})$ that 
\begin{align*}
& \left(\braket{f({\bar{p}}_{\xi},0)B^{1/2} 
h_1|B^{-1/2} h_2} \right)^{*} \\
& = 
\frac{1}{2\pi} \, 
\int_{\mathbb{R}} f(\cdot,0) F_1 
\begin{pmatrix} 
{\mathbb{R}} \rightarrow {\mathbb{C}} \\ 
\tau \mapsto \braket{B^{-1/2} h_2|\exp(i \tau \bar{p}_{\xi}) 
B^{1/2}h_1}
\end{pmatrix}
dv^1 \\
& = \frac{1}{2\pi} \, 
\int_{\mathbb{R}} f(\cdot,0) F_1 e^{{\mathrm{id}}_{\mathbb{R}}}
\begin{pmatrix} 
{\mathbb{R}} \rightarrow {\mathbb{C}} \\ 
\tau \mapsto \braket{h_2|\exp(i \tau \bar{p}_{\xi}) h_1}
\end{pmatrix}
dv^1 \\
& = \frac{1}{2\pi} \,
\int_{\mathbb{R}} f(\cdot,-1) F_1
\begin{pmatrix} 
{\mathbb{R}} \rightarrow {\mathbb{C}} \\ 
\tau \mapsto \braket{h_2|\exp(i \tau \bar{p}_{\xi}) h_1}
\end{pmatrix}
dv^1 \\
& = \braket{h_2|f({\bar{p}}_{\xi},-1)h_1} =
\left(\braket{f({\bar{p}}_{\xi},-1)h_1|h_2}\right)^{*}
\, \, . 
\end{align*}
From the latter, we conclude from the self-adjointness of 
$B^{-1/2}$ that for 
$h_1 \in C_{0}^{\infty}({\mathbb{R}},{\mathbb{C}})$ that
\begin{equation*} 
f({\bar{p}}_{\xi},0) B^{1/2} h_1 \in D(B^{-1/2})
\end{equation*}
and that 
\begin{equation*} 
B^{-1/2}f({\bar{p}}_{\xi},0) B^{1/2} h_1 = 
f({\bar{p}}_{\xi},-1) h_1 \, \, .
\end{equation*} 
Finally, 
since $C_{0}^{\infty}({\mathbb{R}},{\mathbb{C}})$ is dense in 
$L^2_{\mathbb{C}}({\mathbb{R}})$, the latter shows that 
$B^{-1/2}f({\bar{p}}_{\xi},0) B^{1/2}$ is a densely-defined, linear 
and bounded operator in $L^2_{\mathbb{C}}({\mathbb{R}})$,
whose extension to a bounded linear operator on
$L^2_{\mathbb{C}}({\mathbb{R}})$ is given by 
$f({\bar{p}}_{\xi},-1)$.
\end{proof}

\begin{theorem} \label{theorem3}
\begin{align} \label{representation3a}
V \circ  \overline{B^{-1/2}\cos(\tau \bar{p}_{\xi})B^{1/2}} \, 
V^{-1} f
& = \frac{1}{2} \left[ 
f \circ \left({\mathrm{id}}_{\mathbb{R}} + \frac{\tau}{K}
\right) + f \circ \left( {\mathrm{id}}_{\mathbb{R}} - \frac{\tau}{K}
\right)\right] \, \, , \nonumber 
\\
V \circ  \overline{B^{-1/2}\, 
\frac{\sin(\tau \bar{p}_{\xi})}{\bar{p}_{\xi}} B^{1/2}} \, V^{-1} g 
& = \frac{K}{2} \, \sgn(\tau) \left(
\chi_{_{[-|\tau|/K,|\tau|/K]}} * g
\right)
\end{align}
for all $f, g \in L^2_{\mathbb{C}}({\mathbb{R}},e^{2 K \xi})$, 
where the overline on top of the expressions starting with 
$B^{-1/2}$
indicates closure in the operator norm of
$L(L^2_{\mathbb{C}}({\mathbb{R}}),L^2_{\mathbb{C}}({\mathbb{R}}))$.
\end{theorem}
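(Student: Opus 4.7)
The plan is to apply Theorem~\ref{theorem2} to two concrete functions in $\mathcal{A}$ and then carry out a direct multiplicative cancellation between $V^{\pm 1}$ and $B^{\pm 1/2}$ on the dense subspace $C^\infty_0(\mathbb{R},\mathbb{C}) \subset L^2_\mathbb{C}(\mathbb{R}, e^{2K\xi})$. I take
\[
f_1(x+iy) := \cos(\tau(x+iy)), \qquad f_2(x+iy) := \frac{\sin(\tau(x+iy))}{x+iy},
\]
with the removable singularity at $z=0$ filled in for $f_2$. Both are restrictions of entire functions to the strip $\mathbb{R} \times [-1,0]$, hence holomorphic on the interior with continuous boundary values, and the standard bounds $|\cos w|, |\sin w| \leq e^{|\operatorname{Im} w|}$ show they are uniformly bounded on the strip, so $f_1, f_2 \in \mathcal{A}$ with $N = 0$. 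Since $f_i(\cdot, 0)$ recovers $\cos(\tau\,\cdot)$ and $\sin(\tau\,\cdot)/\cdot$, the spectral theorem identifies $f_1(\bar{p}_\xi, 0) = \cos(\tau \bar{p}_\xi)$ and $f_2(\bar{p}_\xi, 0) = \sin(\tau \bar{p}_\xi)/\bar{p}_\xi$, and Theorem~\ref{theorem2} then certifies that the closures $\overline{B^{-1/2} \cos(\tau \bar{p}_\xi) B^{1/2}}$ and $\overline{B^{-1/2} [\sin(\tau \bar{p}_\xi)/\bar{p}_\xi] B^{1/2}}$ exist as bounded operators on $L^2_\mathbb{C}(\mathbb{R})$.

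The second step is to evaluate the $V$-conjugate on $C^\infty_0$. For $h \in C^\infty_0(\mathbb{R},\mathbb{C})$ one has $V^{-1} h = e^{K\xi} h \in C^\infty_0 \subset D(B^{1/2})$ and $B^{1/2} V^{-1} h = e^{-K\xi} e^{K\xi} h = h$. Applying either $G = \cos(\tau \bar{p}_\xi)$ or $G = \sin(\tau \bar{p}_\xi)/\bar{p}_\xi$ to $h$ yields by Theorem~\ref{theorem1} a compactly supported continuous function, which therefore lies in $D(B^{-1/2})$. Thus $V^{-1} h$ belongs to the bare (pre-closure) domain of $B^{-1/2} G B^{1/2}$, and the closure of Theorem~\ref{theorem2} agrees with the raw composition there; since the multiplications by $e^{K\xi}$ in $B^{-1/2}$ and by $e^{-K\xi}$ in $V$ cancel exactly, one obtains
\[
V \circ \overline{B^{-1/2} G B^{1/2}} \circ V^{-1} h = G h,
\]
and Theorem~\ref{theorem1} then produces the desired right-hand sides of (\ref{representation3a}).

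The identities are finally extended from $C^\infty_0$ to all of $L^2_\mathbb{C}(\mathbb{R}, e^{2K\xi})$ by density. The left-hand sides are bounded because $V$ and $V^{-1}$ are unitary isomorphisms and the middle closures are bounded by Theorem~\ref{theorem2}; the right-hand sides are bounded directly, since $\|h(\cdot \pm \tau/K)\|_{L^2(\mathbb{R}, e^{2K\xi})}^2 = e^{\mp 2\tau}\|h\|^2$ by a change of variable, and the convolution operator is controlled by a Fubini-type estimate that absorbs the exponential weight. The one point demanding attention is the justification of the multiplicative cancellation \emph{on the domain}, rather than as a formal identity on functions: this is exactly what the invariance of $C^\infty_0$ under $V^{\pm 1}$ together with the compact support of $Gh$ for $h \in C^\infty_0$ guarantee, so that the closure asserted by Theorem~\ref{theorem2} can legitimately be read off on $C^\infty_0$ before density completes the argument.
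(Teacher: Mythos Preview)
Your proof is correct and follows essentially the same route as the paper: verify that $\cos(\tau\,\cdot)$ and $\widehat{\sin(\tau\,\cdot)/\cdot}$ lie in $\mathcal{A}$, invoke Theorem~\ref{theorem2} for boundedness of the closures, exploit the pointwise cancellations $B^{1/2}V^{-1}=\mathrm{id}$ and $VB^{-1/2}=\mathrm{id}$ on $C_0^\infty$ together with Theorem~\ref{theorem1}, and then extend by density after checking the right-hand sides are bounded on $L^2_{\mathbb{C}}(\mathbb{R},e^{2K\xi})$. The only cosmetic difference is that the paper makes the weighted boundedness of the convolution explicit by writing $\chi_{[-|\tau|/K,|\tau|/K]}*g = V\big[(e^{K\cdot}\chi_{[-|\tau|/K,|\tau|/K]})*V^{-1}g\big]$ and bounding the inner convolution on $L^2_{\mathbb{C}}(\mathbb{R})$, whereas you summarize this as a Fubini-type estimate; both are the same computation.
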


\begin{proof}
In a first step, we note that 
\begin{equation*}
\cos(\tau . 
{\textrm{id}}_{\mathbb{R}}) 
\, \, , \, \, 
\widehat{\frac{\sin(\tau . {\textrm{id}}_{\mathbb{R}})}{{\textrm{id}}_{\mathbb{R}}}} \in {\cal A} \, \, .
\end{equation*}
Hence according to Theorem~\ref{theorem2},
the linear operators  
\begin{equation*}
B^{-1/2}\cos(\tau \bar{p}_{\xi})B^{1/2} \, \, , \, \, 
B^{-1/2}\, 
\frac{\sin(\tau \bar{p}_{\xi})}{\bar{p}_{\xi}} B^{1/2}
\end{equation*}
are in particular densely-defined, with domains containing 
$C_{0}^{\infty}({\mathbb{R}},{\mathbb{C}})$, and bounded. 
As a consequence, 
these operators have unique extensions to bounded linear operators
on $L^2_{\mathbb{C}}({\mathbb{R}})$. \\

Further, 
\begin{equation*}
B^{1/2} V^{-1} \varphi = e^{-K \xi} e^{K \xi } \varphi = 
\varphi \, \, , \, \, 
V B^{-1/2} f = e^{-K \xi} e^{K \xi } f = f 
\end{equation*}
for every 
$\varphi \in C_{0}^{\infty}({\mathbb{R}},{\mathbb{C}})$
and 
$f \in D(B^{-1/2})$. Hence it follows from Theorem~\ref{theorem1}
that 
\begin{align*} 
V \circ  B^{-1/2} \cos(\tau \bar{p}_{\xi}) B^{1/2}  
V^{-1} \varphi
& = \frac{1}{2} \left[ 
\varphi \circ \left({\mathrm{id}}_{\mathbb{R}} + \frac{\tau}{K}
\right) + \varphi \circ \left( {\mathrm{id}}_{\mathbb{R}} - \frac{\tau}{K}
\right)\right] \, \, , \nonumber 
\\
V \circ  B^{-1/2} \, 
\frac{\sin(\tau \bar{p}_{\xi})}{\bar{p}_{\xi}} B^{1/2} \, V^{-1} \varphi 
& = \frac{K}{2} \, \sgn(\tau) \left(
\chi_{_{[-|\tau|/K,|\tau|/K]}} * \varphi
\right)
\end{align*}
for every   
$\varphi \in C_{0}^{\infty}({\mathbb{R}},{\mathbb{C}})$. \\

Finally, for $\tau \in {\mathbb{R}}$, by 
\begin{equation*}
f \circ \left({\mathrm{id}}_{\mathbb{R}} \pm \frac{\tau}{K}
\right) \, \, , \, \, 
\chi_{_{[-|\tau|/K,|\tau|/K]}} * f 
\end{equation*}
for every $f \in L^2_{\mathbb{C}}({\mathbb{R}},e^{2 K \xi})$,
there are defined bounded linear operators on 
$L^2_{\mathbb{C}}({\mathbb{R}},e^{2 K \xi})$. \\

For the proof, we note that 
\begin{equation*}
e^{\mp \tau} . V \circ e^{i(\mp \tau) {\bar{p}}_{\xi}} \circ V^{-1} f = f \circ \left({\mathrm{id}}_{\mathbb{R}} \pm \frac{\tau}{K}
\right)
\end{equation*}
for every $f \in L^2_{\mathbb{C}}({\mathbb{R}},e^{2 K \xi})$.
Also, by 
\begin{equation*}
\begin{pmatrix} 
L^2_{\mathbb{C}}({\mathbb{R}}) \rightarrow 
L^2_{\mathbb{C}}({\mathbb{R}}) \\ 
f \mapsto (\exp(K . {\textrm{id}}_{\mathbb{R}}) \cdot 
\chi_{_{[-|\tau|/K,|\tau|/K]}}) * f 
\end{pmatrix}
\end{equation*}
there is defined a bounded linear operator on 
$L^2_{\mathbb{C}}({\mathbb{R}})$. The latter can be seen as follows.
Since $\exp(K . {\textrm{id}}_{\mathbb{R}}) \cdot 
\chi_{_{[-|\tau|/K,|\tau|/K]}} \in 
L^2_{\mathbb{C}}({\mathbb{R}})$, 
\begin{equation*}
\left(\exp(K . {\textrm{id}}_{\mathbb{R}}) \cdot 
\chi_{_{[-|\tau|/K,|\tau|/K]}})\right) * f \in C_{\infty}({\mathbb{R}}, {\mathbb{C}}) \cap 
L^2_{\mathbb{C}}({\mathbb{R}}) 
\end{equation*}
Further, 
\begin{equation*}
\|\left(\exp(K . {\textrm{id}}_{\mathbb{R}}) \cdot 
\chi_{_{[-|\tau|/K,|\tau|/K]}})\right) * f \|_2 \leq 
\frac{2 |\tau|}{K} \, e^{|\tau|} \|f\|_2
\end{equation*}
for every $f \in L^2_{\mathbb{C}}({\mathbb{R}})$.
Finally, we note that 
\begin{align*}
& V \circ \left(\exp(K . {\textrm{id}}_{\mathbb{R}}) \cdot 
\chi_{_{[-|\tau|/K,|\tau|/K]}})\right) * V^{-1} g \\
& =
\exp(- K . {\textrm{id}}_{\mathbb{R}}) \left[
\left(\exp(K . {\textrm{id}}_{\mathbb{R}}) \cdot 
\chi_{_{[-|\tau|/K,|\tau|/K]}})\right) * 
(\exp(K . {\textrm{id}}_{\mathbb{R}}) \cdot g) \right] \\
& =
\chi_{_{[-|\tau|/K,|\tau|/K]}} * g
\end{align*}
for every $g \in L^2_{\mathbb{C}}({\mathbb{R}},e^{2 K \xi})$,
and collecting the obtained information we arrive 
at (\ref{representation3a}).

\end{proof}

\pagebreak

\end{document}